\newcommand{\set}[1]{\{#1\}}
\newtheorem{theorem}{Theorem}
\newtheorem{lemma}{Lemma}
\newtheorem{corollary}{Corollary}
\newtheorem{observation}{Observation}
\newtheorem{definition}{Definition}[section]
\title{Bounds on the Treewidth of Level-$k$ Rooted Phylogenetic Networks}
\author{Alexey Markin$^{1,\ast}$, Sriram Vijendran$^2$, Oliver Eulenstein$^3$\\
\footnotesize\textit{$^{1}$~Virus and Prion Research Unit, National Animal Disease Center, USDA-ARS}
\\
\footnotesize\textit{$^{2}$~Department of Computer Science, Iowa State University}
\\[2pt]
\footnotesize\textit{*Email: alexey.markin@usda.gov}%
}
\date{}
\begin{document}

\maketitle

\begin{abstract}
    Phylogenetic networks are directed acyclic graphs that depict the genomic evolution of related taxa. Reticulation nodes in such networks (nodes with more than one parent) represent reticulate evolutionary events, such as recombination, reassortment, hybridization, or horizontal gene transfer. Typically, the complexity of a phylogenetic network is expressed in terms of its level, i.e., the maximum number of edges that are required to be removed from each biconnected component of the phylogenetic network to turn it into a tree. Here, we study the relationship between the level of a phylogenetic network and another popular graph complexity parameter -- treewidth. We show a $\frac{k+3}{2}$ upper bound on the treewidth of level-$k$ phylogenetic networks and an improved $(1/3 + \delta) k$ upper bound for large $k$. These bounds imply that many computational problems on phylogenetic networks, such as the small parsimony problem or some variants of phylogenetic diversity maximization, are polynomial-time solvable on level-$k$ networks with constant $k$. Our first bound is applicable to any $k$, and it allows us to construct an explicit tree decomposition of width $\frac{k+3}{2}$ that can be used to analyze phylogenetic networks generated by tools like SNAQ that guarantee bounded network level.
    Finally, we show a $k/13$ lower bound on the maximum treewidth among level-$k$ phylogenetic networks for large enough $k$ based on expander graphs.
\end{abstract}

\section{Introduction}
In recent years, there has been a surge of interest in analyzing genomic data using phylogenetic networks and their variants~\cite{Kong:2022net-classes}, including hybridization networks~\cite{Mirarab:2021MSC-review,blair:2020networks-pov} and ancestral recombination graphs~\cite{Lewanski:2024ARG-era}. Popular software that infers phylogenetic networks includes tools like SNAQ~\cite{Solis:2016snaq,Solis:2017phylonetworks}, PhyloNet~\cite{Wen:2018PhyloNet}, ARGweaver~\cite{Rasmussen:2014ARGweaver}, and tsinfer~\cite{Kelleher:2019tsinfer}. However, despite many advances in the area of phylogenetic networks, both inference and analysis of networks remain computationally challenging. For example, classic phylogenetic problems like the \emph{small parsimony problem} or \emph{phylogenetic diversity maximization} are NP-hard on phylogenetic networks~\cite{Fischer:2015net-parsimony,Bordewich:2022net-pd}. This calls for the exploration of computational approaches to curb the algorithmic complexity associated with phylogenetic networks.

One such approach is parameterizing the complexity of algorithms that run on top of phylogenetic networks using \emph{treewidth}. Many classical NP-hard problems in computer science are polynomial-time solvable on graphs with bounded treewidth (see \cite{bodlaender1993tw-survey} for a survey). Recently, Scornavacca and Weller showed that multiple versions of the \emph{small parsimony problem} are polynomial-time solvable on phylogenetic networks of bounded treewidth~\cite{Scornavacca:2022smp-tw}. Further, van Iersel et al. showed a similar property for the \emph{tree containment} problem of locating a phylogenetic tree in a phylogenetic network~\cite{vanIersel:2023embedding-tw}.

In this work, we examine the relationship between the \emph{level} of a phylogenetic network, a parameter that is frequently used in phylogenetics to limit the complexity of networks, and the treewidth of the underlying unrooted graph. A rooted phylogenetic network $N$ is \emph{level-$k$} if it is sufficient to remove up to $k$ edges from each biconnected component of $N$ to turn it into a tree. We show that the treewidth of a level-$k$ network is at most $\frac{k + 3}{2}$ and we show a tighter upper-bound of $(1/3 + \delta)k$ for any small $\delta > 0$ and large $k$.  Additionally, we show that there are always level-$k$ networks with treewidth linear in $k$; in particular, we show a $\frac{k}{13}$ lower bound on the treewidth of some level-$k$ networks (for large $k$).

Our upper-bound results suggest that the networks generated by tools like SNAQ~\cite{Solis:2016snaq}, PhyNEST~\cite{Kong:2024inference}, and Squirrel~\cite{Holtgrefe:2024squirrel} with a constant network level can be efficiently decomposed into trees with small treewidth. As the level of generated networks grows~\cite{Pyron:2024higher-k}, we may still expect the treewidth to stay low and treewidth-parametrized algorithms to be efficient on higher-level networks.

\section{Preliminaries}
A \emph{(rooted phylogenetic) network} $N = (V(N), E(N))$ is a directed acyclic graph with a single source node called the root with out-degree 1, and all other nodes either have in-degree 1 and out-degree at least 2 (tree nodes), in-degree at least 2 and out-degree 1 (reticulation nodes), or in-degree 1 and out-degree 0 (leaves). 
The underlying undirected graph of a network $N$ is denoted by $G_N$. We denote the set of leaves of $N$ by $L(N)$, the set of tree nodes by $T(N)$, the set of reticulation nodes by $R(N)$, and the root by $\rho_N$.

The \emph{reticulation number} of a network $N$, denoted $r(N)$, is the total number of the incoming edges to its reticulation nodes minus the number of reticulation nodes. The reticulation number represents the number of edges that are required to be removed from $N$ to turn it into a tree.
A phylogenetic network is \emph{binary} if all tree and reticulation nodes have degree 3. Note that the reticulation number of a binary network is the number of reticulation nodes it has.

By \emph{splitting} a node $v$ of degree at least $4$, we imply replacing it with two nodes $v_1$ and $v_2$ connected by a directed edge $(v_1, v_2)$ and distributing the neighbors of $v$ among $v_1$ and $v_2$. If we split a reticulation node, we distribute the parent edges among $v_1$ and $v_2$ (at least two parents for each, including $v_1$ as a parent for $v_2$) and assign the child node to $v_2$. In contrast, if we split a tree node, we distribute the child edges among $v_1$ and $v_2$, and assign the parent node to $v_1$. Note that the split operation does not alter the reticulation number of a network.


In this work, we often refer to a directed graph $N$ while implicitly implying the undirected graph $G_N$.

\paragraph{Treewidth.} A \emph{tree decomposition (TD)} of an undirected graph $G = (V(G), E(G))$ is a tree $T$ and a mapping $B\colon V(T) \to 2^{V(G)}$ that corresponds every node $x$ of $T$ with a \emph{bag} $B(x) \subseteq V(G)$. The pair $(T, B)$ has to satisfy the following properties:
\begin{enumerate}
    \item For every $u \in V(G)$, there must exist a bag $B(x)$ that contains $u$.
    \item For every edge $\set{u, v}$ of $G$ there must exist a bag $B(x)$ that contains both $u$ and $v$.
    \item For a node $u \in V(G)$, consider the set of tree-decomposition nodes $T_u \subseteq V(T)$ that contain $u$ in their bags (i.e., $\forall x \in T_u: u \in B(x)$). Then, $T_u$ must be connected for any $u$.
\end{enumerate}

The width of a tree decomposition $(T, B)$ is the size of the largest bag $|B(x)|$ minus 1. The \emph{treewidth} of $G$, denoted $tw(G)$, is the smallest width among all tree decompositions of $G$. That is,
\[
    tw(G) := \min_{(T, B)}\bigg(\max_{x \in V(T)} |B(x)| - 1\bigg).
\]

For a directed graph $D$, we define $tw(D) := tw(G_D)$.

\paragraph{Level-$k$ networks.} 
A \emph{level} of a network $N$ is the maximum reticulation number of a single biconnected component of $G_N$. $N$ is said to be \emph{level-$k$} if its level is at most $k$.

\section{Network minors}
In this section, we restate a recent result by Coronado et al.~\cite{Coronado:2024cherry} on the relation between the level of a rooted phylogenetic network and an undirected graph minor (Lemma~\ref{lem:minor}). We provide a new (short) proof of that lemma which gives new intuition for this relation.

\begin{observation}\label{obs:ret}
    The reticulation number of a phylogenetic network $N$ is $|E(N)| - |V(N)| + 1$. 
\end{observation}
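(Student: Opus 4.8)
The plan is to prove the identity by a direct degree-counting (handshake) argument, exploiting that every node of $N$ falls into exactly one of four types with a known in-degree profile. First I would recall that summing in-degrees over all vertices counts each directed edge exactly once, so $|E(N)| = \sum_{v \in V(N)} d^-(v)$, where $d^-(v)$ denotes the in-degree of $v$. The definitions partition $V(N)$ into the root $\rho_N$, the tree nodes $T(N)$, the reticulation nodes $R(N)$, and the leaves $L(N)$, and these four sets are disjoint and exhaustive; in particular $|V(N)| = 1 + |T(N)| + |R(N)| + |L(N)|$.

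Next I would read off the in-degree contribution of each type from the definition of a network: the root contributes $0$, each tree node contributes $1$, and each leaf contributes $1$, while each reticulation node contributes its (possibly large) in-degree. Writing $I := \sum_{v \in R(N)} d^-(v)$ for the total number of edges entering reticulation nodes, the handshake sum becomes $|E(N)| = |T(N)| + |L(N)| + I$.

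Finally I would substitute these two expressions into $|E(N)| - |V(N)| + 1$. The $|T(N)|$ and $|L(N)|$ terms cancel, the $+1$ cancels the root's contribution to $|V(N)|$, and what remains is $I - |R(N)|$. This is exactly the reticulation number $r(N)$ as defined (total in-edges of reticulation nodes minus the number of reticulation nodes), which completes the proof.

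The argument is almost entirely bookkeeping, so there is no deep obstacle; the only point requiring care is verifying that the four node types are genuinely a partition of $V(N)$ (so that no vertex is double-counted or omitted) and that the root's in-degree of $0$ is handled correctly, since it is the sole source of the $-|V(N)| + 1$ offset. As a sanity check I would also note the alternative viewpoint: for the connected graph $G_N$, the quantity $|E(N)| - |V(N)| + 1$ is its cyclomatic number, which matches the stated interpretation of $r(N)$ as the number of edges one must delete to obtain a tree.
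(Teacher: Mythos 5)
Your proposal is correct and uses essentially the same argument as the paper: both are in-degree counting arguments exploiting that tree nodes and leaves have in-degree exactly $1$ and the root has in-degree $0$. The paper merely compresses your bookkeeping into the single identity $r(N) = \sum_{v \in V(N)\setminus \set{\rho_N}} (deg^{-}(v) - 1) = |E(N)| - (|V(N)| - 1)$, which is your calculation with the per-type contributions folded into one sum.
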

\begin{proof}
    Note that \[
    r(N) = \sum_{v \in V(N)\setminus \set{\rho_N}} (deg^{-}(v) - 1) = |E(N)| - (|V(N)| - 1),
    \]
    where $deg^{-}(v)$ denotes the in-degree of a node $v$.
\end{proof}

This observation also extends to directed acyclic graphs with more than one root (nodes with in-degree $0$). For such graphs, the reticulation number is $|E(N)| - |V(N)| + s$, where $s$ is the number of roots.

\begin{figure}
    \centering
    \includegraphics[width=0.7\textwidth]{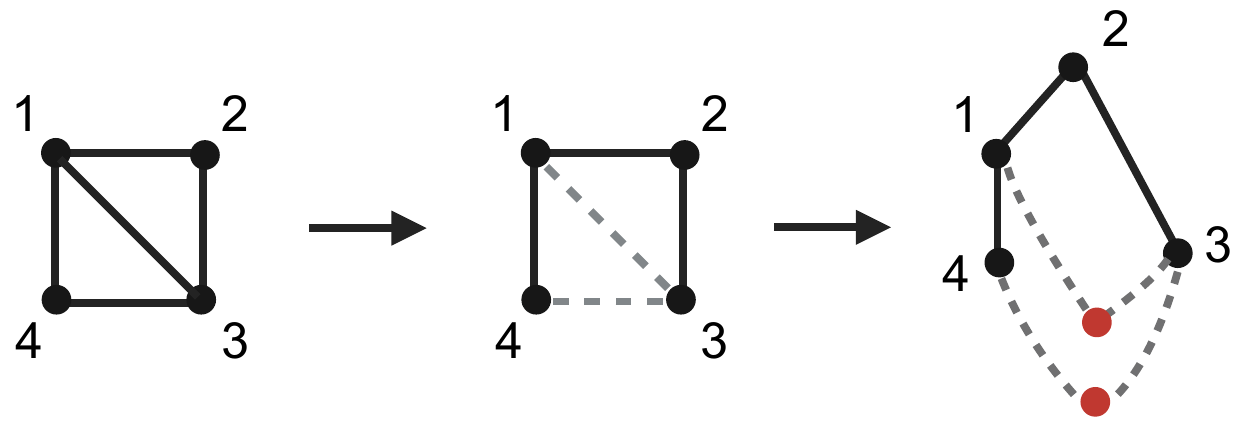}
    \caption{An example of transforming an undirected biconnected graph with $m=5$ edges and $n=4$ nodes to a phylogenetic network with reticulation number $m - n + 1 = 2$. First, we choose an arbitrary spanning tree and root it on an arbitrary node (node 2 here) -- the edges not included in the spanning tree are shown by gray dashed lines. Then, we subdivide the remaining $m - (n - 1)$ edges with reticulation nodes (shown by red circles).}
    \label{fig:transform}
\end{figure}

Next, we show the following.
\begin{lemma}[Coronado et al. \cite{Coronado:2024cherry}]\label{lem:minor}
    Let $H$ be a bi-connected graph with $m$ edges and $n$ nodes. If $H$ is a minor of a network $N$, then the level of $N$ is at least $m - n + 1$. Additionally, there always exists $N$ that has $H$ as a minor and has level exactly $k = m - n + 1$.
\end{lemma}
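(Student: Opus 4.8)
The plan is to prove the two assertions separately, with the second reusing the first. Write $\beta(G) := |E(G)| - |V(G)| + 1$ for the cyclomatic number of a connected graph $G$; this is exactly the number of edges one must delete from $G$ to reach a spanning tree (cf. Observation~\ref{obs:ret}), so the level of $N$ equals $\max_B \beta(B)$ taken over the blocks (biconnected components) $B$ of $G_N$, and $\beta(H) = m - n + 1$.

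For the lower bound I would first discard the trivial case $m - n + 1 \le 0$ (the level is always nonnegative), so assume $\beta(H) \ge 1$, which means $H$ is genuinely $2$-connected. The engine of the argument is that $\beta$ is minor-monotone: edge deletion, vertex deletion, and edge contraction each leave $|E| - |V| + c$ (with $c$ the number of connected components) unchanged or decrease it. Hence, if $H$ is a minor of a single block $B$ of $G_N$, then $\beta(B) \ge \beta(H) = m - n + 1$, and therefore the level of $N$ is at least $m - n + 1$.

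The hard part is exactly this confinement of the minor to one block, and here the $2$-connectivity of $H$ is essential. I would fix a minor model of $H$ -- disjoint connected branch sets $\set{B_v : v \in V(H)}$ together with a connecting edge of $G_N$ for each edge of $H$ -- and show that for every cut vertex $x$ of $G_N$ the whole model can be taken inside the closure of a single component of $G_N - x$. Indeed, at most one branch set contains $x$; every other branch set, being connected, lies in one component of $G_N - x$; and if two such branch sets landed in distinct components, then, since all paths between those components run through $x$, the vertex of $H$ whose branch set contains $x$ would separate $H$, contradicting $2$-connectivity. Trimming the straddling branch set to the relevant side (which stays connected, because any path reaching $x$ from within a component stays in that component up to $x$) places the model on one side of every cut vertex, hence inside a single block, as needed.

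For the existence statement I would carry out the construction of Figure~\ref{fig:transform}: pick a spanning tree of $H$, root it at an arbitrary vertex, orient the tree edges away from the root, and turn each of the $m - (n-1)$ non-tree edges into a reticulation by subdividing it, directing both incident stubs into the new node, and attaching a pendant child so that the in-/out-degree constraints of a phylogenetic network are met. Contracting the subdivisions and deleting the pendants recovers $H$, so $H$ is a (topological) minor of $N$. Each new reticulation contributes exactly $1$ to the reticulation number, so $r(N) = m - n + 1$; since $\max_B \beta(B) \le \sum_B \beta(B) = r(N)$, the level of $N$ is at most $m - n + 1$. Combined with the lower bound applied to this very $N$, the level equals $m - n + 1 = k$. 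What remains is the routine check that the oriented graph is acyclic and meets every degree requirement of a phylogenetic network.
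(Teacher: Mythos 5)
Your proof is correct and follows essentially the same route as the paper: the lower bound via minor-monotonicity of the cyclomatic quantity $|E|-|V|+1$ under deletions and contractions, and the existence claim via the spanning-tree construction of Figure~\ref{fig:transform}. You are in fact more careful than the paper in two spots where it is terse --- explicitly confining the minor model to a single block using the $2$-connectivity of $H$ (the paper simply posits a biconnected subgraph $M$ of $N$ that contracts to $H$), and patching the out-degree constraints of the new reticulation nodes with pendant leaves.
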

\begin{proof}

    Consider a network $N$ that contains $H$ as a minor. Let $M$ be a bi-connected subgraph of $N$ that can be converted into $H$ by a series of edge contractions. Let $s$ denote the number of roots of $M$ ($s \ge 1$). Then the reticulation number $r(M)$ is $|E(M)| - |V(M)| + s \ge |E(M)| - |V(M)| + 1$. Note that an edge contraction reduces the number of edges in a graph by at least $1$, and reduces the number of nodes by exactly one. That is, the $|E(M')| - |V(M')| + 1$ quantity does not increase after each edge contraction. Therefore, we have

    \[
        k \ge r(M) \ge |E(M)| - |V(M)| + 1 \ge m - n + 1.
    \]

    It remains to show that graph $H$ can be transformed into a network $N$ of level $m - n + 1$. We perform the transformation by first extracting an arbitrary spanning tree of $H$, and rooting it on an arbitrary node. Then, we subdivide any edge of $H$ not included into the spanning tree with a reticulation vertex with two "sides" of the subdivided edge pointing to it. Figure~\ref{fig:transform} shows an example of such a transformation. Since a spanning tree has $n - 1$ edges, the reticulation number resulting from such transformation is $m - n + 1$.
    
\end{proof}

Lemma~\ref{lem:minor} is a useful result in studying the phylogenetic networks. For example, using this lemma, we can easily establish when phylogenetic networks are planar or outerplanar.

\begin{corollary}[Moulton and Wu 2022~\cite{Moulton:2022planar}]\label{cor:planar}
    All networks of level 3 or less are planar, and there exists a level-4 non-planar network. Further, all level-1 networks are outerplanar.
\end{corollary}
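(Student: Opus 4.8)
The plan is to reduce everything to the minor characterization of Lemma~\ref{lem:minor}, which tells us that a network contains a biconnected graph $H$ with $m$ edges and $n$ nodes as a minor if and only if its level is at least $m - n + 1$, and that the minimal such level is always achievable. So to prove Corollary~\ref{cor:planar} I would recast each planarity/outerplanarity claim as a statement about forbidden minors, exploiting the classical Kuratowski--Wagner theorem (a graph is planar iff it has no $K_5$ and no $K_{3,3}$ minor) and its outerplanar analogue (a graph is outerplanar iff it has no $K_4$ and no $K_{2,3}$ minor).

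First I would compute the relevant quantity $m - n + 1$ for each of the four small obstruction graphs. For $K_5$ we have $n = 5$ and $m = 10$, giving $m - n + 1 = 6$; for $K_{3,3}$ we have $n = 6$ and $m = 9$, giving $m - n + 1 = 4$. For the outerplanar obstructions, $K_4$ has $n = 4$ and $m = 6$, giving $m - n + 1 = 3$, while $K_{2,3}$ has $n = 5$ and $m = 6$, giving $m - n + 1 = 2$. Note that all four graphs are biconnected, so Lemma~\ref{lem:minor} applies directly. These numbers are the whole engine of the proof.

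For the first claim, I would argue that any network of level $3$ or less cannot have $K_5$ (which requires level $\ge 6$) or $K_{3,3}$ (which requires level $\ge 4$) as a minor, hence by Kuratowski--Wagner it is planar. For the level-$4$ non-planar example, I would invoke the second half of Lemma~\ref{lem:minor}: since $K_{3,3}$ yields $m - n + 1 = 4$, there exists a network of level exactly $4$ having $K_{3,3}$ as a minor, and such a network is necessarily non-planar. For the outerplanarity claim, a level-$1$ network cannot contain $K_4$ (level $\ge 3$) or $K_{2,3}$ (level $\ge 2$) as a minor, so by the outerplanar obstruction theorem it is outerplanar.

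The main obstacle I anticipate is a subtle mismatch between ``minor'' in the undirected Kuratowski/outerplanar sense and the ``minor'' used in Lemma~\ref{lem:minor}, which is phrased for the directed network $N$ but really concerns its underlying undirected graph $G_N$ together with the biconnectedness hypothesis on $H$; I would need to confirm that the forbidden-minor theorems are stated for undirected minors of $G_N$ and that biconnectedness of $K_5$, $K_{3,3}$, $K_4$, $K_{2,3}$ is exactly what licenses the application of the lemma. A secondary point is the direction of the inequality: Lemma~\ref{lem:minor} gives a lower bound on level from a minor, so the planarity direction (bounded level forbids the minor) is a clean contrapositive, whereas the non-planar existence direction genuinely needs the constructive second half of the lemma, so I would be careful to cite the correct half in each case.
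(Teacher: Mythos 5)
Your proposal is correct and is exactly the argument the paper intends: the corollary is presented as an immediate consequence of Lemma~\ref{lem:minor} via the Kuratowski--Wagner obstructions ($K_5$, $K_{3,3}$ with $m-n+1 = 6, 4$) and the outerplanar obstructions ($K_4$, $K_{2,3}$ with $m-n+1 = 3, 2$), using the contrapositive of the first half for the forbidding directions and the constructive second half for the level-$4$ non-planar example. One small caution: your opening sentence restates the lemma as an ``if and only if,'' which it is not (a network of level $\ge m-n+1$ need not contain $H$ as a minor), but your actual argument only uses the two correct directions, so the proof stands.
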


    

Next, we use Lemma~\ref{lem:minor} to establish several new upper and lower bounds on the treewidth of level-$k$ networks.

\section{Upper bounds on treewidth of level-$k$ networks}

Janssen et al. noted that the treewidth of a levek-$k$ network is at most $k+1$~\cite{Janssen:2019DG-treewidth}. While this upper-bound is tight for level-1 networks, we can show a tighter upper-bound for $k \ge 2$:

\begin{theorem}\label{thm:ub-simple}
    The treewidth of a level-$k$ network $N$ is at most $\frac{k + 3}{2}$.
\end{theorem}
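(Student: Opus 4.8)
The plan is to reduce the problem to bounding the treewidth of a biconnected component and then to exploit the freedom to \emph{split} nodes (which preserves the reticulation number and hence the level) in order to turn an arbitrary level-$k$ network into a well-structured one before building a tree decomposition. First I would argue that it suffices to bound the treewidth of a single biconnected component of $G_N$: since treewidth of a graph equals the maximum treewidth over its biconnected components (the blocks can be glued along cut vertices in the decomposition tree without increasing bag size), and leaves/pendant tree-parts contribute nothing, I only need to handle one block $M$ with reticulation number at most $k$.

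\medskip

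Next I would preprocess $M$ using the split operation from the preliminaries. The key observation is that by repeatedly splitting tree and reticulation nodes of high degree, I can assume every internal node of $M$ has degree exactly $3$, i.e.\ that $M$ is binary, \emph{without changing the reticulation number} (splitting preserves $r$). This is useful because the quantitative relationship I want is cleanest on a binary (cubic) block. For a binary biconnected component with reticulation number $k$, Observation~\ref{obs:ret} gives $|E(M)| - |V(M)| + 1 = k$; combined with the cubic degree condition $2|E(M)| = 3|V(M)|$ (ignoring the at most two degree-2 attachment points, which I would treat separately), this pins down $|V(M)| = 2(k-1) + O(1)$ and $|E(M)| = 3(k-1) + O(1)$. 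So the whole block has only $\Theta(k)$ nodes, and the factor-of-two improvement over the naive $k+1$ bound should come precisely from the fact that the number of \emph{vertices} in a cubic graph with $k$ extra edges grows like $2k$ rather than $k$.

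\medskip

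The concrete tree-decomposition construction is where the $\frac{k+3}{2}$ figure must emerge, and this is the step I expect to be the main obstacle. The cleanest route is to take the spanning tree / reticulation-subdivision structure from the proof of Lemma~\ref{lem:minor}: a biconnected block of level $k$ is obtained from a spanning tree by adding back $k$ edges, equivalently by subdividing $k$ edges with reticulation vertices. I would build a tree decomposition whose underlying tree mirrors the spanning tree, placing each edge of the spanning tree in a bag with its endpoints, and then I must route the $k$ ``extra'' edge-endpoints through the decomposition so that the connectivity (property~3) holds for each of the at most $2k$ endpoints involved. The naive routing keeps all $k$ endpoints alive simultaneously and recovers only $k+1$; the improvement requires showing that one can \emph{pair up} or \emph{orient} these endpoints so that at any bag only about $k/2$ of the non-tree-edge vertices need to be present at once — intuitively, each added vertex needs to reach only its two spanning-tree neighbors, and a careful charging argument should let each bag carry its tree edge plus roughly half of the reticulation-induced vertices.

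\medskip

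The hard part will be making this pairing/charging argument rigorous: controlling exactly how many ``active'' reticulation endpoints coexist in the worst-case bag and confirming the constant so that the maximum bag size is $\frac{k+3}{2}+1$ (giving width $\frac{k+3}{2}$). I would verify the bound first on small cases (checking it matches the known $k=1$ value of $2$ and improves strictly for $k \ge 2$), then handle the two degree-$2$ attachment vertices and the gluing of blocks as routine bookkeeping that does not affect the leading term.
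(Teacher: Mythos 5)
Your proposal has a genuine gap at precisely the step you yourself flag as ``the main obstacle'': the pairing/charging argument that would keep only about $k/2$ reticulation endpoints alive in any bag is never constructed, and nothing in the sketch indicates how it could be. The preparatory steps (reducing to a biconnected block, splitting to make it cubic, counting $|V(M)| \approx 2(k-1)$) are sound, but they do not by themselves bound the treewidth: a graph on roughly $2k$ vertices can a priori have treewidth up to $2k-1$, so the vertex count alone gives nothing better than the trivial bound, and the spanning-tree-mirroring decomposition with naive routing recovers only $k+1$, as you concede. Asserting that one can ``pair up or orient'' the endpoints so that only half are active at any bag is simply the theorem restated in different words, not an argument for it; there is no reason given why the worst-case bag should not need most of the $2k$ endpoints simultaneously.

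The paper proves the bound by a different and much shorter mechanism: induction on $k$. Take a biconnected component $M$ of maximum treewidth. Either $G_M$ is a cycle (treewidth $2$, done), or it has a vertex $v$ of degree at least $3$. Deleting $v$ removes at least $3$ edges and exactly one vertex, so by Lemma~\ref{lem:minor} the remaining graph can be oriented as a network of level at most $(|E(G_M)|-3) - (|V(G_M)|-1) + 1 \le k-2$; by the induction hypothesis its treewidth is at most $\frac{(k-2)+3}{2} = \frac{k+1}{2}$, and re-inserting $v$ into every bag of a tree decomposition of the smaller graph raises the width by at most $1$, giving $\frac{k+3}{2}$. The factor-of-two improvement thus comes from trading a drop of \emph{two} in the level for an increase of \emph{one} in treewidth at each induction step --- a local argument --- rather than from any global pairing of reticulation endpoints along a spanning tree. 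If you want to salvage your explicit-decomposition route (which would have the separate virtue of producing a concrete decomposition), you must supply the charging argument in full; as written, the proposal does not constitute a proof.
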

\begin{proof}
    We prove this statement by induction on $k$. For $k= 0$, the theorem holds trivially, and for $k=1$, it holds by \cite{Janssen:2019DG-treewidth}.
    
    Note that the treewidth of a graph equals the largest treewidth of its biconnected components~\cite{Bodlaender:1998k-arboretum}. Consider a biconnected component of $N$ with the maximum treewidth -- let us denote it $M$. If $G_M$ does not have any degree-3 (or higher) nodes, then $G_M$ is a cycle, and hence the treewidth of $M$ and, consequently, $N$ is $2$, and the theorem holds for any $k \ge 1$.

    Otherwise, remove any degree-3 (or higher) node from $G_M$, obtaining a graph $G'$. Note that $|E(G')| \le |E(G_M)| - 3$ and $|V(G')| = |V(G_M)| - 1$. By Lemma~\ref{lem:minor}, we can turn $G'$ into a directed network $M'$ with level $k' = |E(G')| - |V(G')| + 1 \le (|E(G_M)| - 3 - |V(G_M)| + 1) +1 \le k - 2$.
    By induction, the treewidth of $M'$ is at most $\frac{k + 1}{2}$. Further, adding a single node to a graph can increase its treewidth by at most one (e.g., this node can be added to all bags of a tree decomposition of the original graph). That is, $tw(M) \le tw(M') + 1 \le \frac{k+3}{2}$.
\end{proof}

For large $k$, we can obtain a better bound by adapting a result of Fomin and H{\o}ie~\cite{Fomin:2006pathwidth}.

\begin{lemma}[Fomin and H{\o}ie 2006]\label{lem:fomin}
    For any $\epsilon > 0$, there exists an integer $n_{\epsilon}$ such that for any graph $G$ with node degree at most $3$ and $|V(G)| \ge n_{\epsilon}$, we have $tw(G) \le (\frac{1}{6} + \epsilon)|V(G)|$ where $n_{\epsilon} = (8 / \epsilon) \cdot ln(1 / \epsilon) \cdot (1 + 1 / \epsilon^2)$.
\end{lemma}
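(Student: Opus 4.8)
The plan is to reduce the statement to the better-behaved quantity of \emph{pathwidth}, exploiting $tw(G) \le pw(G)$ for every graph; it then suffices to bound the pathwidth of a subcubic graph by $(1/6 + \epsilon)|V(G)|$. The engine of the argument is a bisection-width theorem for cubic graphs due to Monien and Preis: for every $\epsilon' > 0$ there is a threshold above which every $3$-regular graph on $n$ vertices admits a partition of its vertices into two nearly equal halves with at most $(1/6 + \epsilon')n$ edges crossing the cut. The explicit expression $n_{\epsilon} = (8/\epsilon)\ln(1/\epsilon)(1 + 1/\epsilon^2)$ in the statement is exactly the threshold inherited from this bisection bound, so fixing $\epsilon'$ as a function of $\epsilon$ and tracking this threshold through the construction is precisely where that formula originates.

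First I would close the gap between ``degree at most $3$'' and ``$3$-regular'': a subcubic graph can be completed to a cubic one by attaching a bounded number of gadgets (or a matching on the deficient vertices) without increasing the vertex count by more than a lower-order term and without decreasing the pathwidth, or alternatively one checks that the Monien--Preis bound itself degrades gracefully on subcubic inputs. I would also argue componentwise, since pathwidth is the maximum over connected components. Next, I would convert an edge bisection into a balanced \emph{vertex} separator $S$ by keeping one endpoint of each cut edge, so that $|S| \le (1/6 + \epsilon')n$ and $G - S$ splits into two parts, each of size at most $n/2$.

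From here the path decomposition is built recursively: lay out a path decomposition of one side, then $S$, then the other side. The naive form of this recursion carries the current separator through \emph{both} recursive halves and hence sums the separator sizes over all levels, $\sum_i (1/6 + \epsilon')(n/2^i) = (1/3 + O(\epsilon'))n$, which only reproduces the weaker $k/3$-type bound and loses the decisive factor of two. Avoiding this accumulation is the step I expect to be the main obstacle: the interface vertices $S$ of a given split must be confined to a \emph{local} segment of the path decomposition, placed adjacent to the boundary between the two halves with the endpoints of cut edges ordered nearest that boundary, so that across the recursion the separator sizes are combined by taking a \emph{maximum} rather than a sum. Making this precise requires a careful amortized analysis of the vertex-separation profile of the resulting ordering, together with the facts that the recursive subproblems remain subcubic and that their boundary contributions telescope, so that the dominant term is the top-level separator of size $(1/6 + \epsilon')n$. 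Choosing $\epsilon'$ slightly below $\epsilon$ to absorb the lower-order and rounding terms, and taking $|V(G)| \ge n_{\epsilon}$ so that every sufficiently large subproblem stays above the Monien--Preis threshold, then yields $pw(G) \le (1/6 + \epsilon)|V(G)|$ and hence the claimed treewidth bound.
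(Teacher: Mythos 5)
First, note what you are comparing against: the paper does not prove this lemma at all --- it imports it verbatim from Fomin and H{\o}ie (2006), so there is no internal proof to match. Judged against the original source, your starting ingredients are in fact the right ones: Fomin and H{\o}ie do derive the bound on pathwidth (whence $tw(G) \le pw(G)$ gives the treewidth claim) from the Monien--Preis theorem that every sufficiently large $3$-regular graph has a bisection with at most $(\frac{1}{6} + \epsilon')n$ crossing edges, and the explicit threshold $n_{\epsilon} = (8/\epsilon)\ln(1/\epsilon)(1 + 1/\epsilon^2)$ is indeed inherited from Monien--Preis. Your attribution and your identification of where the formula comes from are both correct.

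However, the proposal has a genuine gap, and it is exactly the step you flag yourself: the claim that the separators across recursion levels can be made to combine by \emph{maximum} rather than by \emph{sum}. As stated, the ``confine $S$ to a local segment near the boundary'' idea does not work. In any path decomposition, for each cut edge $\{u,v\}$ with $u$ in the left half and $v$ in the right half, one of the two endpoints must occur in every bag between its own occurrence and the bag where the edge is realized; so a top-level cut edge whose left endpoint lies deep inside the left half forces a carried vertex across every intermediate gap (or, symmetrically, its right endpoint is carried left with the same effect). Consequently, at a gap in the interior of the left half, the carried top-level endpoints are active \emph{simultaneously} with the second-level separator of that half, and no reordering of endpoints ``nearest the boundary'' prevents the level contributions from accumulating: the natural accountings (carrying endpoints, or a boundary-augmented recursion in which interface vertices must appear in an end bag) each resolve to $\sum_i (\frac{1}{6}+\epsilon')\frac{n}{2^i} = (\frac{1}{3} + O(\epsilon'))n$. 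Beating this factor of two is precisely the technical content of Fomin and H{\o}ie's argument, which your proposal restates as a goal (``a careful amortized analysis of the vertex-separation profile'') rather than carries out; as written, the proof establishes only a $(\frac{1}{3}+\epsilon)$-type bound. Two smaller loose ends point the same way: in the recursion the subproblems eventually fall \emph{below} the Monien--Preis threshold and must be handled separately while keeping the additive error within $\epsilon |V(G)|$ for the specific $n_{\epsilon}$ claimed, and the subcubic-to-cubic padding has to be done so as not to leave the class of graphs to which Monien--Preis applies.
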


Before we prove our main upper bound, observe that contracting degree-2 nodes does not change the treewidth.


\begin{observation}\label{obs:subdivision}
    Contracting an edge incident on a degree-2 node does not change the treewidth of a biconnected graph with more than 3 nodes.
\end{observation}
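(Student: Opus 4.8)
The plan is to prove both inequalities $tw(G') \le tw(G)$ and $tw(G) \le tw(G')$, where $G$ is the biconnected graph (with more than $3$ nodes), $v$ is the degree-$2$ node with neighbors $u$ and $w$, and $G'$ is obtained by contracting the edge $\set{u, v}$ (contracting $\set{v, w}$ is symmetric). Write $u'$ for the vertex of $G'$ obtained by merging $u$ and $v$; its neighborhood is that of $u$ together with $w$, so in particular $u'$ and $w$ are adjacent in $G'$.

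First I would dispatch the easy direction: $G'$ is a minor of $G$ (a single edge contraction), and since treewidth is minor-monotone~\cite{Bodlaender:1998k-arboretum} we get $tw(G') \le tw(G)$. Equivalently, one can take an optimal tree decomposition of $G$ and replace every occurrence of $u$ and of $v$ in the bags by $u'$, yielding a valid decomposition of $G'$ of no larger width.

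For the reverse inequality $tw(G) \le tw(G')$, I would start from an optimal tree decomposition $(T', B')$ of $G'$ and reintroduce $v$. Since $u'$ and $w$ are adjacent in $G'$, property (2) of the definition guarantees a bag $B'(x)$ containing both. Interpreting the vertex $u'$ of $G'$ as the vertex $u$ of $G$ in every bag, I attach to $x$ a new leaf node $y$ of the decomposition tree with bag $\set{u, v, w}$. Verifying the three axioms is routine: $v$ is now covered; the edges $\set{u, v}$ and $\set{v, w}$ are covered by the new bag, while every edge of $G$ not incident to $v$ was already covered after the relabeling; and the connectivity condition holds because $v$ appears only in the single new bag, while the subtrees for $u$ and $w$ stay connected since $y$ is adjacent only to $x$, whose bag already contains both. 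The new bag has size $3$, contributing width $2$.

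The only genuine subtlety — and the step I would treat most carefully — is checking that this size-$3$ bag does not raise the width, i.e. that $tw(G') \ge 2$; this is exactly where the hypothesis of more than $3$ nodes is used. Since $G$ is biconnected on $n \ge 4$ vertices it has $|E(G)| \ge n$ edges, and a short edge count, distinguishing whether $\set{u,w}$ was already present (noting that if $|E(G)| = n$ then $G$ is a single cycle $C_n$, in which $u$ and $w$ are non-adjacent for $n \ge 4$), shows that $G'$ is connected on $n - 1$ vertices with at least $n - 1$ edges and hence contains a cycle, so $tw(G') \ge 2$. The hypothesis thus rules out the degenerate case $G = K_3$, where contracting an edge collapses a triangle (treewidth $2$) to a single edge (treewidth $1$). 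With $tw(G') \ge 2$ established, adding the size-$3$ bag leaves the width unchanged, giving $tw(G) \le tw(G')$, and together with the easy direction we conclude $tw(G) = tw(G')$.
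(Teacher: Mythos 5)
Your proof is correct and follows essentially the same route as the paper's: both dispatch $tw(G') \le tw(G)$ by minor-monotonicity, and both prove the reverse by taking a tree decomposition of $G'$, locating a bag containing the merged vertex and $w$, attaching a new bag $\set{u,v,w}$ to it, and invoking $tw(G') \ge 2$ to ensure the size-$3$ bag does not raise the width. The only divergence is in justifying $tw(G') \ge 2$: the paper asserts that $G'$ remains biconnected on at least $3$ vertices (a fact that itself quietly relies on the contracted edge being incident to a degree-2 node), whereas your elementary edge count showing $G'$ is connected with at least $|V(G')|$ edges, and hence contains a cycle, is an equally valid and arguably more self-contained way to reach the same conclusion.
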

\begin{proof}
    Let $G'$ be a graph obtained by a contraction of edge $\set{u,v}$ from graph $G$. W.l.o.g. assume that $u$ is a degree-2 node, with $u$ also adjacent to a node $w$. It is well known that an edge contraction cannot increase the treewidth of a graph; hence, $tw(G') \le tw(G)$~\cite{Robertson:1986minorsII}.

    To show that $tw(G) \le tw(G')$, consider a tree decomposition $(T, B)$ of $G'$. Let $b \in B$ be a bag that contains $u$ and $w$ (it must exist since $u$ and $w$ are adjacent in $G'$). We then create a new bag, $b' = \set{u,v,w}$ and connect it to $b$ in $T$, obtaining a tree decomposition of $G$. Since $G$ is biconnected and has at least 4 nodes, $G'$ must also be biconnected on at least 3 nodes and therefore $tw(G') \ge 2$. Then, our modification of $(T, B)$ cannot increase the width of the tree decomposition.
\end{proof}

\begin{theorem}\label{thm:third}
    Let $N$ be a level-$k$ network where every maximal biconnected component has reticulation number $k \ge n_{\epsilon}$ for some fixed $\epsilon > 0$. Then, $tw(N) < (1/3 + 2\epsilon)k$.
\end{theorem}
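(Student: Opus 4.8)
The plan is to reduce the problem to bounding the treewidth of a cubic (3-regular) graph and then invoke Lemma~\ref{lem:fomin}. Since the treewidth of a graph equals the maximum treewidth among its biconnected components~\cite{Bodlaender:1998k-arboretum}, it suffices to bound $tw(M)$ for a single maximal biconnected component $M$ of $N$, which by assumption has reticulation number $k$. I would work entirely with the undirected graph $G_M$, so the directed structure of $N$ plays no further role.

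First I would repeatedly contract an edge incident to each degree-2 vertex of $G_M$. By Observation~\ref{obs:subdivision} each such contraction leaves the treewidth unchanged (and keeps the graph biconnected), so the resulting graph $\hat M$ satisfies $tw(\hat M) = tw(M)$; moreover, contracting a degree-2 vertex removes one vertex and one edge and hence preserves $|E| - |V| + 1$, so $\hat M$ still has reticulation number $k$ and now has minimum degree at least $3$. Next I would make the graph cubic by expanding every vertex of degree $d \ge 4$ into a path of $d-2$ degree-3 vertices, distributing the $d$ original incident edges among them. This expansion preserves the reticulation number (each split adds one vertex and one edge), and contracting the inserted path edges recovers $\hat M$, so $\hat M$ is a minor of the resulting cubic graph $M'$. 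By minor-monotonicity of treewidth~\cite{Robertson:1986minorsII}, $tw(\hat M) \le tw(M')$.

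For the counting step, because $M'$ is cubic we have $2|E(M')| = 3|V(M')|$, whence $k = |E(M')| - |V(M')| + 1 = \tfrac{1}{2}|V(M')| + 1$ and therefore $|V(M')| = 2(k-1)$. Since $k \ge n_\epsilon$ (and $n_\epsilon \ge 2$), we get $|V(M')| \ge n_\epsilon$, so Lemma~\ref{lem:fomin} applies to $M'$ and yields
\[
    tw(M) = tw(\hat M) \le tw(M') \le \left(\tfrac{1}{6} + \epsilon\right)|V(M')| = \left(\tfrac{1}{6} + \epsilon\right)\cdot 2(k-1) < \left(\tfrac{1}{3} + 2\epsilon\right)k .
\]

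I expect the main obstacle to be the cubification step: one must verify that a degree-$d$ vertex can always be split into a path of $d-2$ degree-3 vertices while keeping the graph biconnected (a simple edge count shows the expanded vertices have total degree $3(d-2)$, so each can be made degree exactly $3$), and one must handle the fact that suppressing degree-2 vertices may introduce parallel edges. The latter is harmless, since parallel edges do not affect treewidth and the degree hypothesis of Lemma~\ref{lem:fomin} still holds; the remaining edge/vertex bookkeeping and the threshold check $|V(M')| \ge n_\epsilon$ are then routine.
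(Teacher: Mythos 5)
Your proposal is correct and follows essentially the same route as the paper: reduce to a maximal biconnected component, normalize degrees to at most $3$ while preserving the reticulation number, suppress degree-2 vertices via Observation~\ref{obs:subdivision}, count vertices in terms of $k$, and apply Lemma~\ref{lem:fomin}. The only differences are bookkeeping: the paper splits high-degree nodes in the directed network \emph{before} extracting the component and stops the degree-2 contractions early (at $n_\epsilon$ nodes) to secure the size hypothesis of Lemma~\ref{lem:fomin}, whereas you cubify afterwards via vertex expansion (using minor-monotonicity in place of the paper's bag-merging argument) and verify $|V(M')| = 2(k-1) \ge n_\epsilon$ directly from the handshake count --- an equally valid, and arguably slightly cleaner, way to cross the same bridge.
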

\begin{proof}
    First, we convert $N$ into a binary network $N'$ by repeatedly splitting nodes of a degree higher than $3$ until all tree and reticulation nodes have a degree exactly $3$. Note that splitting a node does not change the network's reticulation number and cannot decrease the treewidth. To see this for the treewidth, note that if we split a node $v$ into two nodes $v_1$ and $v_2$, then one can replace $v_1$ and $v_2$ in a tree decomposition of the new graph with $v$ and obtain a tree decomposition of the original graph with the same width (or smaller).
    
    Let $M$ be a maximal biconnected component of $N'$ with the maximum treewidth. Note that we enforced the reticulation number of $M$ to be $k$. Let $n_3$ and $n_2$ denote the number of degree-3 and degree-2 nodes in $M$, respectively. By Lemma~\ref{lem:minor}, $k \ge \frac{3 \cdot n_3}{2} + n_2 - (n_3 + n_2) + 1 = \frac{n_3}{2} + 1$. From which we get $n_3 \le 2k - 2$.

    Next, we contract any edge in $M$ that is incident on a degree-2 node until either all nodes become degree-3 or the number of nodes becomes $n_\epsilon$, resulting in a graph $M'$. If all nodes in $M'$ have degree-3, then $|V(M')| \le n_3$; otherwise, $|V(M')| = n_{\epsilon}$. In both cases, $|V(M')| \le 2k - 2$.
    
    By Lemma~\ref{lem:fomin} and Observation~\ref{obs:subdivision}, we have
    \[
        tw(M) \le tw(M') \le \left(\frac{1}{6} + \epsilon\right)|V(M')| < \left(\frac{1}{3} + 2\epsilon\right)k.
    \]
\end{proof}

\section{Lower bound on the maximum treewidth of level-$k$ networks}

The lower bound can be established by examining expander graphs.

\begin{definition}[Node expansion]\label{def:vexpander}
    For a set $X \subseteq V(G)$ of a graph $G$, let $N(X)$ be the neighborhood of $X$ in $G$, i.e., the set of nodes in $V(G) \setminus X$ that are adjacent to a node in $X$. Then, for any $\alpha \in [\frac{1}{|V(G)|},1]$, the \emph{node expansion} of $G$ with parameter $\alpha$ is
    \[
        x_{\alpha}(G) := \min_{\substack{X \subseteq G\\ 0 < |X| \le \alpha |V(G)|}}{\frac{|N(X)|}{|X|}}
    \]
\end{definition}


    

\begin{lemma}[Lubotzky~\cite{Lubotzky:1994book}, Proposition 1.2.1]\label{lem:lubotzky}
    For $d \ge 5$ and some large $n$, there exists a $d$-regular graph $G$ with $x_{1/2}(G) = 1/2$.
\end{lemma}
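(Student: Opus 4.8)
This is an existence result for expander graphs: the goal is to produce, for each $d \ge 5$ and all large $n$, a $d$-regular graph $G$ on $n$ vertices whose node expansion (Definition~\ref{def:vexpander}) satisfies $x_{1/2}(G) \ge 1/2$, i.e.\ $|N(X)| \ge |X|/2$ for every $X$ with $|X| \le n/2$. I would pursue two complementary routes --- an explicit spectral construction and a probabilistic one --- and use the spectral gap as the main quantitative handle.

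For the spectral route, I would take $G$ to be a Ramanujan graph (e.g.\ the Lubotzky--Phillips--Sarnak construction for $d = p+1$ with $p$ prime), so that every nontrivial adjacency eigenvalue satisfies $|\lambda| \le 2\sqrt{d-1}$; writing $\lambda := \max(|\lambda_2|,|\lambda_n|)$, we have $\lambda^2 \le 4(d-1)$. A Cauchy--Schwarz / spectral argument (Tanner's inequality) then converts this into a node-expansion bound: decomposing the indicator $\mathbf{1}_X$ into its component along the all-ones vector and its orthogonal part and bounding $\|A\mathbf{1}_X\|^2$ yields, for $|X| = \alpha n$, the estimate $\frac{|N(X)|}{|X|} \ge \frac{(d^2-\lambda^2)(1-\alpha)}{\lambda^2 + (d^2-\lambda^2)\alpha}$. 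This right-hand side is decreasing in $\alpha$, so its minimum over $0 < \alpha \le 1/2$ is attained at $\alpha = 1/2$, where it equals $\frac{d^2 - \lambda^2}{d^2 + \lambda^2}$; substituting $\lambda^2 \le 4(d-1)$ reduces the claim to an elementary inequality in $d$.

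The probabilistic route models $G$ as a union of $d$ random perfect matchings and bounds the probability that expansion fails. For a fixed $X$ of size $s$, the event $|N(X)| < s/2$ forces all edges leaving $X$ into a set $W \supseteq X$ of size below $3s/2$; in the configuration model this has probability on the order of $(|W|/n)^{ds}$. A union bound over the $\binom{n}{s}\binom{n}{\lfloor s/2\rfloor}$ choices of $X$ and its putative neighborhood, summed over $s \le n/2$ and estimated with Stirling / entropy bounds, is driven below $1$ once $d$ is a sufficiently large constant and $n$ is large, yielding a graph with the required expansion.

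The main obstacle is obtaining the sharp threshold $d \ge 5$ together with the clean constant $1/2$: the binding case in both routes is the large-set regime $\alpha = 1/2$, and there both the plain Tanner bound and the first-moment entropy estimate are lossy, admitting only a larger $d$. Recovering $d \ge 5$ is exactly where the refined analysis of the cited source enters --- either the optimal spectral vertex-expansion bounds (in the spirit of Kahale) that improve on Tanner, or a sharper second-moment / switching argument that accounts for neighbors being distinct --- and this is the step I expect to absorb essentially all of the technical effort.
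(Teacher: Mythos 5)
The paper does not prove this lemma at all: it imports it as a black box from Lubotzky's book (Proposition~1.2.1), so there is no internal argument to compare against, and your closing move --- deferring the sharp constants to the cited source --- is in effect exactly what the paper does. Judged as a standalone proof attempt, however, both of your routes have concrete problems, and in opposite directions. The spectral route cannot work, even in principle, at the parameters claimed: for $d = 5$ the Ramanujan bound gives $\lambda^2 \le 4(d-1) = 16$, so your own Tanner-type estimate at $\alpha = 1/2$ yields $\frac{d^2 - \lambda^2}{d^2 + \lambda^2} = \frac{9}{41} < \frac{1}{2}$, and the Kahale-style refinements you invoke improve vertex expansion only for \emph{small} linear-sized sets (where expansion approaches $d/2$); at $\alpha = 1/2$ the spectral bound is essentially tight and no refinement of it recovers the constant $1/2$. (As a side issue, LPS graphs exist only for $d = p+1$ with $p \equiv 1 \bmod 4$ prime, so $d=5$ is not even directly available from that construction.) So the parenthetical hope that "optimal spectral bounds in the spirit of Kahale" close the gap is misplaced.

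The probabilistic route is the right one --- it is essentially Lubotzky's actual proof --- but your quantitative setup fails at the critical point, and your diagnosis of \emph{why} is off. With the configuration-model heuristic $(|W|/n)^{ds}$, the union bound at $s = n/2$, $|W| = 3s/2$ compares roughly $2^{(1 + H(1/4))n} \approx 2^{1.81n}$ choices against probability $(3/4)^{5n/2} \approx 2^{-1.04n}$, and genuinely does not close. The fix in the source is not a second-moment or switching argument: Lubotzky works in the bipartite model of $d$ independent random permutations, where the containment probability per permutation is the exact hypergeometric quantity $\binom{|B|}{|A|}/\binom{n}{|A|}$, exponentially smaller than the $(|B|/n)^{|A|}$ heuristic, and he proves the \emph{stronger} expansion property $|N(A)| \ge \bigl(1 + c(1 - |A|/n)\bigr)|A|$ with $c = 1/2$; at $|A| = n/2$ this forces $|B| = 5n/8$ rather than $3n/4$, and the resulting first-moment computation (about $2^{1.95n}$ choices against $2^{-2.74n}$ probability for $d=5$) does close. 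The statement as used in the paper, $x_{1/2}(G) \ge 1/2$ in the sense of Definition~\ref{def:vexpander} (the "$=$" in the lemma is immaterial for the application, which only needs the lower bound in Lemma~\ref{lem:tw-exp}), then follows by translating the bipartite permutation graph into a $d$-regular graph. So the missing idea is precise: the right random model with exact probabilities plus the stronger expansion functional, not a lossier first moment rescued by heavier machinery.
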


The classical result by Grohe and Marx relates treewidth and node expansion.
\begin{lemma}[Grohe and Marx~\cite{Grohe:2009treewidth}]\label{lem:tw-exp}
    $tw(G) \ge \lfloor x_{\alpha}(G) \cdot (\alpha / 2) \cdot |V(G)| \rfloor$
\end{lemma}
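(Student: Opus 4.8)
The plan is to lower-bound $tw(G)$ by exhibiting a vertex set $X$ whose neighborhood acts as a small balanced separator, which then forces $x_\alpha(G)\,|X|$ to be small. I would use the standard bridge between treewidth and balanced separators: fix an optimal tree decomposition $(T,B)$ of $G$ of width $w := tw(G)$, so every bag satisfies $|B(x)| \le w+1$. The goal is to locate a single bag that separates $G$ into pieces of controlled size, convert a union of those pieces into a test set $X$ of the right cardinality, and finally feed $X$ into the node-expansion inequality.

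First I would establish the balanced-separator claim by a centroid argument on the tree $T$. For each edge $\set{x,y}$ of $T$, deleting it splits $T$ into two subtrees; I orient the edge toward the side whose bags collectively contain more vertices of $V(G)$ lying outside the adhesion $B(x) \cap B(y)$. Following these orientations in the usual way yields a node $x^{\ast}$ such that no subtree hanging off $x^{\ast}$ carries more than half of $V(G)$ outside $B(x^{\ast})$. Consequently $S := B(x^{\ast})$ is a separator with $|S| \le w+1$, and every connected component of $G - S$ has at most $\tfrac{1}{2}|V(G)|$ vertices. Next I would assemble $X$ from these components: writing $C_1, C_2, \dots$ for the components of $G - S$, I group them into a union $X$ with $\tfrac{\alpha}{2}|V(G)| \le |X| \le \alpha|V(G)|$. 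If some single component already has at least $\tfrac{\alpha}{2}|V(G)|$ vertices I take it alone (for $\alpha \ge \tfrac{1}{2}$ it fits under $\alpha|V(G)|$ by the balance guarantee); otherwise I greedily accumulate components, each of size below $\tfrac{\alpha}{2}|V(G)|$, stopping the first time the running total reaches $\tfrac{\alpha}{2}|V(G)|$, so that it cannot overshoot $\alpha|V(G)|$.

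The conclusion is then immediate. Because $X$ is a union of whole components of $G - S$, every neighbor of $X$ outside $X$ lies in $S$, so $N(X) \subseteq S$ and $|N(X)| \le |S| \le w+1$. On the other hand, since $|X| \le \alpha|V(G)|$, the node-expansion definition (Definition~\ref{def:vexpander}) gives $|N(X)| \ge x_\alpha(G)\,|X| \ge x_\alpha(G)\cdot\tfrac{\alpha}{2}|V(G)|$. Combining the two estimates yields $w \ge x_\alpha(G)\cdot\tfrac{\alpha}{2}|V(G)| - 1$, and the integrality of $tw(G)$ then delivers the stated bound $tw(G) \ge \lfloor x_\alpha(G)\cdot(\alpha/2)\cdot|V(G)| \rfloor$.

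The main obstacle I anticipate is the grouping step in the regime $\alpha < \tfrac{1}{2}$: the treewidth-based separator is only guaranteed $\tfrac{1}{2}$-balanced, so a single component may exceed $\alpha|V(G)|$ and be unusable for $X$. I would handle this either by re-running the centroid argument inside any oversized component to refine $S$ into an $\alpha$-balanced separator (at a controlled increase in $|S|$), or -- since the application in Lemma~\ref{lem:lubotzky} only needs $\alpha = \tfrac{1}{2}$ -- by specializing to that case, where every component automatically satisfies $|C| \le \tfrac{1}{2}|V(G)| = \alpha|V(G)|$ and the greedy grouping goes through cleanly. A secondary subtlety is the exact floor accounting: the slack between $|N(X)|$ and $|S| \le w+1$ must be tracked carefully so that the final integer bound lands on $\lfloor x_\alpha(G)\cdot(\alpha/2)\cdot|V(G)| \rfloor$ rather than a version weaker by one.
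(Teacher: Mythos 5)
The paper never proves this lemma --- it is imported verbatim from Grohe and Marx with a citation and used only once (in the $k/13$ lower bound, with $\alpha = \tfrac12$) --- so there is no internal proof to compare yours against; I can only judge your argument on its own terms. Your route is the standard one for treewidth-versus-expansion bounds: a centroid argument on an optimal tree decomposition produces a bag $S$ with $|S| \le tw(G)+1$ that is a $\tfrac12$-balanced separator, a union $X$ of components of $G-S$ with $\tfrac{\alpha}{2}|V(G)| \le |X| \le \alpha|V(G)|$ satisfies $N(X) \subseteq S$, and Definition~\ref{def:vexpander} then forces $|S|$ to be large. The core chain $x_\alpha(G)\cdot\tfrac{\alpha}{2}|V(G)| \le x_\alpha(G)\,|X| \le |N(X)| \le |S| \le tw(G)+1$ is correct, and this is almost certainly the intended proof of the cited result.

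However, as written your proof does not establish the stated inequality, only a version that can be weaker by one. Writing $Q := x_\alpha(G)\cdot(\alpha/2)\cdot|V(G)|$, the bound $tw(G) \ge Q - 1$ plus integrality gives $tw(G) \ge \lfloor Q \rfloor$ only when $Q$ is \emph{not} an integer; when $Q$ is an integer it gives only $\lfloor Q \rfloor - 1$. You flag this yourself but leave it unresolved, and ``integrality of $tw(G)$'' alone cannot close it. The fix is to make the key inequality strict: run the greedy grouping largest-component-first and stop the first time the running total \emph{strictly} exceeds $\tfrac{\alpha}{2}|V(G)|$ (no overshoot, since every component added after the first has size at most that of the first, which is at most $\tfrac{\alpha}{2}|V(G)|$ in this branch); then $|X| > \tfrac{\alpha}{2}|V(G)|$, so when $x_\alpha(G) > 0$ one gets $tw(G)+1 \ge x_\alpha(G)\,|X| > Q$, and integrality yields $tw(G) \ge \lfloor Q \rfloor$ (the case $x_\alpha(G)=0$ is trivial). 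Two further cases are left unaddressed: (i) $G - S$ may contain at most $\tfrac{\alpha}{2}|V(G)|$ vertices in total, so no admissible $X$ exists at all --- there one must argue directly from $|S| \ge (1 - \tfrac{\alpha}{2})|V(G)| \ge \tfrac12|V(G)|$ that the bound still holds; and (ii) the regime $\alpha < \tfrac12$, for which you only sketch two alternatives, one of which (specializing to $\alpha = \tfrac12$) proves strictly less than the lemma states. For the paper's purposes the damage is limited --- its only application takes $\alpha = \tfrac12$ and has enough slack that even the off-by-one-weaker bound suffices --- but as a proof of the lemma as stated, these gaps are real.
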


Using the above results, we can show a lower bound for the maximum treewidth of level-$k$ networks.
\begin{theorem}
    For large $k$, the maximum treewidth among level-$k$ networks is at least $\frac{k}{13}$
\end{theorem}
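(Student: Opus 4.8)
The plan is to build, for each large $k$, a level-$k$ network that contains a bounded-degree expander as a minor, and then combine the minor-monotonicity of treewidth with the Grohe--Marx bound (Lemma~\ref{lem:tw-exp}). First I would fix $d = 5$ (the smallest degree allowed by Lemma~\ref{lem:lubotzky}, which maximizes the final constant) and take a $5$-regular graph $G$ on $n$ nodes with $x_{1/2}(G) = 1/2$; such a $G$ has $m = 5n/2$ edges. Before applying Lemma~\ref{lem:minor} I would check that $G$ is biconnected: a cut vertex $v$ would isolate a component $C$ of $G - v$ with $|C| \le n/2$ and $N(C) \subseteq \set{v}$, so the expansion bound forces $\frac{1}{2} \le \frac{|N(C)|}{|C|} \le \frac{1}{|C|}$, i.e.\ $|C| \le 2$; but then a node of $C$ has at most two distinct neighbors, contradicting $5$-regularity. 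Hence $G$ is biconnected, and by Lemma~\ref{lem:minor} there is a network $N$ having $G$ as a minor with level exactly $k = m - n + 1 = \tfrac{3n}{2} + 1$.

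The second step is the treewidth estimate. Since treewidth cannot increase when passing to a minor, $tw(N) \ge tw(G)$, and Lemma~\ref{lem:tw-exp} with $\alpha = 1/2$ and $x_{1/2}(G) = 1/2$ gives
\[
    tw(G) \ge \left\lfloor x_{1/2}(G)\cdot \tfrac{\alpha}{2}\cdot n \right\rfloor = \left\lfloor \tfrac{n}{8} \right\rfloor .
\]
Substituting $n = \tfrac{2(k-1)}{3}$ yields $tw(N) \ge \lfloor (k-1)/12 \rfloor$, which already exceeds $k/13$ once $k$ is large. This establishes the bound at every level of the special form $k = \tfrac{3n}{2}+1$.

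To reach \emph{all} large $k$ I would use that the maximum treewidth among level-$k$ networks is non-decreasing in $k$, since any level-$\ell$ network with $\ell \le k$ is itself a level-$k$ network. Given an arbitrary large $k$, I would choose the largest admissible order $n$ with $\tfrac{3n}{2}+1 \le k$; the associated network is then a genuine level-$k$ witness with $tw \ge \lfloor n/8 \rfloor$. The main obstacle is precisely here: Lemma~\ref{lem:lubotzky} only supplies these expanders for a specific infinite set of orders $n$, and the argument needs that set to be dense enough that the chosen $n$ is not much smaller than $\tfrac{2(k-1)}{3}$. Because the consecutive admissible orders of the underlying construction have ratio tending to $1$, the selected $n$ satisfies $n \ge (1-o(1))\tfrac{2k}{3}$, whence $\lfloor n/8 \rfloor \ge (1-o(1))\tfrac{k}{12} \ge \tfrac{k}{13}$ for all sufficiently large $k$. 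The gap between the natural constant $\tfrac{1}{12}$ and the claimed $\tfrac{1}{13}$ is exactly what absorbs the floor and this discretization loss.
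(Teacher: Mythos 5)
Your proposal follows essentially the same route as the paper: the same $5$-regular expander with $x_{1/2}(G)=1/2$ from Lemma~\ref{lem:lubotzky}, converted into a level-$(3n/2+1)$ network via Lemma~\ref{lem:minor}, and the same application of Lemma~\ref{lem:tw-exp} yielding $tw(N) \ge \lfloor (k-1)/12 \rfloor > k/13$. The two extra details you supply --- checking that the expander is biconnected before invoking Lemma~\ref{lem:minor}, and using monotonicity of the level together with density of admissible expander orders to cover all large $k$ rather than only $k$ of the special form $3n/2+1$ --- are points the paper's proof leaves implicit, so they refine rather than alter the argument.
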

\begin{proof}
    Let $G$ be a $5$-regular graph on $n$ nodes with $x_{1/2}(G) = 1/2$ (Lemma~\ref{lem:lubotzky} guarantees its existence for a large $n$). Then, let $N$ be a network with a single biconnected component obtained from $G$ as shown in Lemma~\ref{lem:minor}. The level of $N$ is $k = \frac{5}{2}n - n + 1$, from which we get $n = \frac{2}{3}(k-1)$. By Lemma~\ref{lem:tw-exp}, it follows
    \[
        tw(N) \ge \left\lfloor \frac{1}{2} \times \frac{1}{4} \times \frac{2}{3}(k-1) \right\rfloor \ge \frac{k-1}{12} - 1 > \frac{k}{13}.
    \]
\end{proof}

\section*{Conclusion}
Many phylogenetic networks generated in practice, particularly the hybridization networks, are level-$1$ or level-$2$ networks. These networks' treewidth is exactly 2 by Theorem~\ref{thm:ub-simple}, and therefore tree decomposition-based algorithms are directly applicable to such networks. While we can expect the level of generated networks to increase in the future, our results suggest that the treewidth will remain low as long as $k$ is low. Further investigation is required to reduce the gap between our $k/13$ and $k/3$ lower- and upper-bounds on the treewidth for large $k$, and design other practical upper-bounds on the treewidth of phylogenetic networks.

\bibliography{references}

\begin{thebibliography}{10}

\bibitem{blair:2020networks-pov}
C.~Blair and C.~An{\'e}.
\newblock Phylogenetic trees and networks can serve as powerful and complementary approaches for analysis of genomic data.
\newblock {\em Systematic Biology}, 69(3):593--601, 2020.

\bibitem{bodlaender1993tw-survey}
H.~L. Bodlaender.
\newblock A tourist guide through treewidth.
\newblock {\em Acta cybernetica}, 11(1-2):1--21, 1993.

\bibitem{Bodlaender:1998k-arboretum}
H.~L. Bodlaender.
\newblock A partial k-arboretum of graphs with bounded treewidth.
\newblock {\em Theoretical computer science}, 209(1-2):1--45, 1998.

\bibitem{Bordewich:2022net-pd}
M.~Bordewich, C.~Semple, and K.~Wicke.
\newblock On the complexity of optimising variants of phylogenetic diversity on phylogenetic networks.
\newblock {\em Theoretical Computer Science}, 917:66--80, 2022.

\bibitem{Coronado:2024cherry}
T.~M. Coronado, J.~C. Pons, and G.~Riera.
\newblock Counting cherry reduction sequences is counting linear extensions (in phylogenetic tree-child networks).
\newblock {\em arXiv preprint arXiv:2403.14491}, 2024.

\bibitem{Fischer:2015net-parsimony}
M.~Fischer, L.~Van~Iersel, S.~Kelk, and C.~Scornavacca.
\newblock On computing the maximum parsimony score of a phylogenetic network.
\newblock {\em SIAM Journal on Discrete Mathematics}, 29(1):559--585, 2015.

\bibitem{Fomin:2006pathwidth}
F.~V. Fomin and K.~H{\o}ie.
\newblock Pathwidth of cubic graphs and exact algorithms.
\newblock {\em Information Processing Letters}, 97(5):191--196, 2006.

\bibitem{Grohe:2009treewidth}
M.~Grohe and D.~Marx.
\newblock On tree width, bramble size, and expansion.
\newblock {\em Journal of Combinatorial Theory, Series B}, 99(1):218--228, 2009.

\bibitem{Holtgrefe:2024squirrel}
N.~Holtgrefe, K.~T. Huber, L.~van Iersel, M.~Jones, S.~Martin, and V.~Moulton.
\newblock Squirrel: Reconstructing semi-directed phylogenetic level-1 networks from four-leaved networks or sequence alignments.
\newblock {\em bioRxiv}, pages 2024--11, 2024.

\bibitem{Janssen:2019DG-treewidth}
R.~Janssen, M.~Jones, S.~Kelk, G.~Stamoulis, and T.~Wu.
\newblock Treewidth of display graphs: bounds, brambles and applications.
\newblock {\em Journal of Graph Algorithms and Applications}, 23(4):715--743, 2019.

\bibitem{Kelleher:2019tsinfer}
J.~Kelleher, Y.~Wong, A.~W. Wohns, C.~Fadil, P.~K. Albers, and G.~McVean.
\newblock Inferring whole-genome histories in large population datasets.
\newblock {\em Nature genetics}, 51(9):1330--1338, 2019.

\bibitem{Kong:2022net-classes}
S.~Kong, J.~C. Pons, L.~Kubatko, and K.~Wicke.
\newblock Classes of explicit phylogenetic networks and their biological and mathematical significance.
\newblock {\em Journal of Mathematical Biology}, 84(6):47, 2022.

\bibitem{Kong:2024inference}
S.~Kong, D.~L. Swofford, and L.~S. Kubatko.
\newblock Inference of phylogenetic networks from sequence data using composite likelihood.
\newblock {\em Systematic Biology}, page syae054, 2024.

\bibitem{Lewanski:2024ARG-era}
A.~L. Lewanski, M.~C. Grundler, and G.~S. Bradburd.
\newblock The era of the arg: An introduction to ancestral recombination graphs and their significance in empirical evolutionary genomics.
\newblock {\em PLoS Genetics}, 20(1):e1011110, 2024.

\bibitem{Lubotzky:1994book}
A.~Lubotzky.
\newblock {\em Discrete groups, expanding graphs and invariant measures}, volume 125.
\newblock Springer Science \& Business Media, 1994.

\bibitem{Mirarab:2021MSC-review}
S.~Mirarab, L.~Nakhleh, and T.~Warnow.
\newblock Multispecies coalescent: theory and applications in phylogenetics.
\newblock {\em Annual Review of Ecology, Evolution, and Systematics}, 52(1):247--268, 2021.

\bibitem{Moulton:2022planar}
V.~Moulton and T.~Wu.
\newblock Planar rooted phylogenetic networks.
\newblock {\em IEEE/ACM Transactions on Computational Biology and Bioinformatics}, 20(2):1289--1297, 2022.

\bibitem{Pyron:2024higher-k}
R.~A. Pyron, K.~A. O’Connell, E.~A. Myers, D.~A. Beamer, and H.~Ba{\~n}os.
\newblock Complex hybridization in a clade of polytypic salamanders (plethodontidae: Desmognathus) uncovered by estimating higher-level phylogenetic networks.
\newblock {\em Systematic Biology}, page syae060, 2024.

\bibitem{Rasmussen:2014ARGweaver}
M.~D. Rasmussen, M.~J. Hubisz, I.~Gronau, and A.~Siepel.
\newblock Genome-wide inference of ancestral recombination graphs.
\newblock {\em PLoS genetics}, 10(5):e1004342, 2014.

\bibitem{Robertson:1986minorsII}
N.~Robertson and P.~D. Seymour.
\newblock Graph minors. ii. algorithmic aspects of tree-width.
\newblock {\em Journal of algorithms}, 7(3):309--322, 1986.

\bibitem{Scornavacca:2022smp-tw}
C.~Scornavacca and M.~Weller.
\newblock Treewidth-based algorithms for the small parsimony problem on networks.
\newblock {\em Algorithms for Molecular Biology}, 17(1):15, 2022.

\bibitem{Solis:2016snaq}
C.~Sol{\'\i}s-Lemus and C.~An{\'e}.
\newblock Inferring phylogenetic networks with maximum pseudolikelihood under incomplete lineage sorting.
\newblock {\em PLoS genetics}, 12(3):e1005896, 2016.

\bibitem{Solis:2017phylonetworks}
C.~Sol{\'\i}s-Lemus, P.~Bastide, and C.~An{\'e}.
\newblock Phylonetworks: a package for phylogenetic networks.
\newblock {\em Molecular biology and evolution}, 34(12):3292--3298, 2017.

\bibitem{vanIersel:2023embedding-tw}
L.~Van~Iersel, M.~Jones, and M.~Weller.
\newblock Embedding phylogenetic trees in networks of low treewidth.
\newblock {\em Discrete Mathematics \& Theoretical Computer Science}, 25(Discrete Algorithms), 2023.

\bibitem{Wen:2018PhyloNet}
D.~Wen, Y.~Yu, J.~Zhu, and L.~Nakhleh.
\newblock Inferring phylogenetic networks using phylonet.
\newblock {\em Systematic biology}, 67(4):735--740, 2018.

\end{thebibliography}
\bibliographystyle{abbrv}

\end{document}